\theoremstyle{plain}
\newtheorem{theorem}{Theorem}[section]
\newtheorem{lemma}[theorem]{Lemma}
\newtheorem{proposition}[theorem]{Proposition}
\theoremstyle{definition}
\theoremstyle{remark}
\numberwithin{equation}{section}
\newenvironment{acknowledgement}[1][Acknowledgement
]{\begin{trivlist} \item[\hskip \labelsep {\bfseries
#1}]}{\end{trivlist}}
\begin{document}
\title{Recovering the Topology in One Point Interaction Problem on Extended Non-Local Star Graphs}
\author[1]{Lung-Hui Chen}
\affil[1]{\footnotesize General Education Center, Ming Chi University of Technology, New Taipei City, 24301, Taiwan.}
\maketitle
\begin{abstract}
The author studies the inverse spectral problem of Sturm-Liouville operator on a star-like graph. To this star-like graph centered at the origin as its vertex, there are attached $m$ edges that imposed the Sturm-Liouville operator with certain non-local potential functions with some suitable local boundary value conditions. At the vertex, we consider one point interaction condition at vertex to model a network that fixed on the end of the edges on the graph. The vibration and flow changes are monitored at that vertex which serves as certain control/regulation center. The author shows that the system is solvable under very necessary conditions. It is crucial to recover  the topology of the network. In this paper, author constructs the special solution edge by edge and point to point.
\\MSC: 47A55/34A55/34K29.
\\Keywords:  Fourier analysis; star graph; inverse spectral problem; network theory; non-local regulator; boundary control problem; traffic control.
\end{abstract}
\section{Introduction}
This work explores the inverse spectral analysis of Sturm-Liouville operators situated on star-shaped metric graphs. On such graphs, where multiple edges emanate from a central vertex, provide an effective mathematical model for analyzing signal transmission and dynamic processes within structured networks. The application of Sturm-Liouville theory in this context has become increasingly relevant in bridging mathematical physics with real-world engineering systems such as telecommunications, energy grids, and transport logistics, as noted in  studies \cite{Freiling,Kostrykin,Kuchment,Kurasov,Novikov,Pokornyi}.

The behavior of wave propagation and diffusion on these networks can be described by differential operators acting along each edge, governed by local boundary conditions and damped by non-local potential terms. At the central vertex, a fixed-point (frozen argument) condition is imposed, simulating a centralized monitoring or control mechanism. This setting allows us to model systems where information or energy must flow efficiently and reversibly across the distributed structure.

The non-local boundary conditions are not merely technical features, but also contribute to the stability and physical realization of certain engineering modeling. In particular, they help explain the physical principles such as the conservation of energy which in turn the mathematical uniqueness of solutions of the state. The investigation presented here demonstrates that, under appropriate local or non-local conditions, the inverse spectral problem admits an inverse unique and solvable formulation.

\par
The most interesting aspect of such modeling is its connection to the theory of quantum graphs, which studies the physical dynamics on metric graphs governed by various sorts of differential operators \cite{Alb,Bon,Freiling,Gerasimenko,Kottos,Naimark}, along with many different kinds of boundary conditions which often compromise the energy conservation laws or even certain commercial requirements in real world scenarios. The boundary condition acts as an integral part of the modeling these specific problems. It plays a role in many engineering projects. We are interested to formulate and to solve a few number of direct and inverse spectral problems and scattering problems on quantum graphs \cite{Belishev,Carlson,Kostrykin,Kurasov,Kurasov2,Pivovarchik1,Yurko1,Yurko2}.
Specifically, in this paper, we study the inverse spectral problems for a Schr\"{o}dinger operator with non-local potential functions on a star graph. Previously, the inverse spectral problems on a finite graph for the Sturm-Liouville problems with non-local potentials or with various boundary value conditions were considered in \cite{Alb,Alb2,Bon,Nizhnik1,Nizhnik2,Pivovarchik1,Yang}.
\par
Let us consider the following star graph $T$: The central vertex of $T$ is located at the origin $0\in\mathbb{R}^{2}$ and connected by 
$m$ edges, each sharing the origin as their common vertex, which we denoted as $V_{0}$.  Each edge $e_{j}$ has length $l_{j}$, $j=1,2,\ldots,m$ with end points denoted as $v_{j}$, $j=1,2,\ldots,m$.
    \par
On each edge $e_{j}$, we assume that there exists the function $\psi_{j}(x)\in W_{2}^{2}(0,l_{j})$ such that the function $\psi_{j}(x)$ satisfies the following eigenvalue problem with complex-valued non-local potentials $q_{j}(x)\in L^{2}(0,l_{j})$:
\begin{eqnarray}\label{1.1}
-\frac{d^{2}}{dx^{2}}\psi_{j}(x)+q_{j}(x)\psi_{j}(0)=\lambda\psi_{j}(x),\,0<x<l_{j},\,j=1,2,\ldots,m,
\end{eqnarray}
in which $m\geq2$, $\lambda\in\mathbb{C}$ is the spectral parameter,
and it is imposed with the boundary conditions
\begin{equation}\label{1.2}
\psi_{j}(l_{j})=0,\,j=1,2,\ldots,m;\,\psi_{1}(0)=\psi_{2}(0)=\cdots=\psi_{m}(0),
\end{equation}
and the non-local boundary value condition
\begin{equation}\label{1.3}
\sum_{j=1}^{m}\big\{\frac{d}{dx}\psi_{j}(0)-\int_{0}^{l_{j}}\psi_{j}(x)\overline{q_{j}(x)}dx\big\}=0.
\end{equation}
Also, $\lambda=(\frac{z\pi}{l_{j}})^{2}$, $z\in\mathbb{C}$, and the solution $\psi_{j}(x,z)$ depends on $(x,z)\in\mathbb{R}\times\mathbb{C}$ for each $j$.
We note that the inverse spectral problems is posed with a frozen argument at the vertex that is called a non-local point interaction in quantum mechanics, and  the equation actually reads as
\begin{equation}\label{1.4}
\sum_{j=1}^{m}\big\{\frac{d}{dx}\psi_{j}(0,z)-\int_{0}^{l_{j}}\psi_{j}(x,z)\overline{q_{j}(x)}dx\big\}=0,\,z\in\mathbb{C}.
\end{equation}
In this paper, we use the spectral data to recover the information on angular information between the edges $$\{e_{1},e_{2},\cdots,e_{m}\}.$$
To this end, we consider the extended closed graph from the star graph $T$, which is denoted as $\overline{T}$ that is obtained by connecting the end point $v_{j}$, $j=1,2,\ldots,m$, and $v_{m+1}=v_{1}$. Let us denote the angle between $e_{j}$ and $e_{j+1}$ as $\theta_{j}$, and the angle between $e_{m}$ and $e_{1}$ as $\theta_{m}$. Moreover, we call the new edges between the vertices $v_{j}$ and $v_{j+1}$ as $\bar{e}_{j}$ with length $\bar{l}_{j}$, and the edge between $v_{m}$ and $v_{1}$ as $\bar{e}_{m}$ with length $\bar{l}_{m}$. 
We set $\cup_{j=1}^{m}\bar{e}_{j}$ to be a cyclic metric graph by taking $v_{m+1}=v_{1}$.
Surely, $v_{j},\,j\geq1,$ is a vertex with edge $e_{j}$, $\bar{e}_{j-1}$, and $\bar{e}_{j+1}$. The extension graph $\overline{T}$ is trivially unique.
\par
Therefore, we modify the differential system~(\ref{1.1}),~(\ref{1.2}), and~(\ref{1.4}) on the presence of  $\bar{e}_{j'}$, $j'=1,2,\cdots,m.$ For $z\in\mathbb{C}$,
\begin{eqnarray}\label{1.5}
\left\{
  \begin{array}{llllll}\vspace{7pt}
-\frac{d^{2}}{dx^{2}}\psi_{j}(x,z)+q_{j}(x)\psi_{j}(0,z)=(\frac{z\pi}{l_{j}})^{2}\psi_{j}(x,z),\,x\in e_{j},\,j=1,2,\ldots,m;\\\vspace{7pt}
-\frac{d^{2}}{dx^{2}}\psi_{j'}(x,z)=(\frac{z\pi}{l_{j}})^{2}\psi_{j'}(x,z),\,x\in \bar{e}_{j'},\,j'=1,2,\ldots,m;\\\vspace{7pt}
\psi_{j}(l_{j},z)=0,\,j=1,2,\ldots,m;\\\vspace{7pt}
\psi_{1}(0,z)=\psi_{2}(0,z)=\cdots=\psi_{m}(0,z);\\\vspace{7pt}
\psi_{j'}(l_{j'}^{-},z)=\psi_{j'-1}(\bar{l}_{j'-1}^{-},z)=\psi_{j'}(0^{+},z)=0,\,j=1,\cdots,m;\\\vspace{7pt}
\sum_{j=1}^{m}\sum_{j'=1}^{m}\big\{\frac{d}{dx}\psi_{j}(0^{+},z)+\frac{d}{dx}\psi_{j'}(l_{j'}^{-},z)+\frac{d}{dx}\psi_{j'-1}(\bar{l}_{j'-1}^{-},z)+\frac{d}{dx}\psi_{j'}(0^{+},z)\\\hspace{80pt}-\int_{0}^{l_{j}}\psi_{j}(x,z)\overline{q_{j}(x)}dx\big\}=0,
  \end{array}
\right.
\end{eqnarray}
in which $\frac{d}{dx}\psi_{j'}(l_{j'}^{-},z)+\frac{d}{dx}\psi_{j'-1}(\bar{l}_{j'-1}^{-},z)+\frac{d}{dx}\psi_{j'}(0^{+},z)$ prepares for some kind Kirchhoff's condition at vertex $v_{j'}$ on $\cup_{j=1}^{m}\bar{e}_{j}$ and the last line is the non-local Kirchhoff's condition in the system which collects all the eigenvalues of~(\ref{1.5}). When using subscript $j'$, we mean to count the objects on $\cup_{j=1}^{m}\bar{e}_{j}$.
In this paper, we aim to recover the information on $\theta_{j}$, $j=1,2,\ldots,m$, while the knowledge of $l_{j}$ and $q_{j}(x)$, $j=1,2,\ldots,m$, is given.

\section{Special Solution and Characteristic Function}
In this section, we construct the special solutions for system~(\ref{1.5}).
We start with the Fourier-sine series in $L^{2}(0,l_{j})$:
$$q_{j}(x)=\sum_{n=1}^{\infty}q_{j,n}\sin[\frac{n\pi}{l_{j}} (l_{j}-x)]$$ with Fourier-Sine transform and its coefficients
\begin{eqnarray}\label{2.1}
&&\mathcal{F}(q_{j})(z)=\frac{2}{l_{j}}\int_{0}^{l_{j}}q_{j}(x)\sin[\frac{z\pi}{l_{j}} (l_{j}-x)]dx;\\
&&q_{j,n}=\frac{2}{l_{j}}\int_{0}^{l_{j}}q_{j}(x)\sin[\frac{n\pi}{l_{j}} (l_{j}-x)]dx.
\end{eqnarray}
Firstly, we consider a special solution of system~(\ref{1.1}) with $\lambda=(\frac{z\pi}{l_{j}})^{2}$ to meet all conditions in~(\ref{1.1}). We choose
\begin{equation} \label{2.2}
\phi_{j}(x;z)=\Big(\sin{z\pi(1-\frac{x}{l_{j}})}+\sin{z l_{j}}\sum_{n=1}^{\infty}\frac{q_{j,n}\sin[\frac{n\pi}{l_{j}}(l_{j}-x)]}{z^{2}-(\frac{n\pi}{l_{j}})^{2}}\Big)\prod_{k\neq j}\sin{z l_{k}},\,x\in T,
\end{equation} and then we can verify
\begin{equation}
\phi_{1}(0;z)=\phi_{2}(0;z)=\cdots=\phi_{m}(0;z)=\sin z\pi;
\end{equation}
\begin{equation}
\phi_{1}(l_{1};z)=\phi_{2}(l_{2};z)=\cdots=\phi_{m}(l_{m};z)=0.
\end{equation}
Thus, one point interaction at $v_{0}=0$ in~(\ref{1.5}) is satisfied.
Moreover,
\begin{equation}\label{3333}
\phi_{j}'(x;z)=\Big(-\frac{z\pi}{l_{j}}\cos{z\pi(1-\frac{x}{l_{j}})}-\sin{z l_{j}}\sum_{n=1}^{\infty}\frac{n\pi}{l_{j}}\frac{q_{j,n}\cos[\frac{n\pi}{l_{j}}(l_{j}-x)]}{z^{2}-(\frac{n\pi}{l_{j}})^{2}}\Big)\prod_{k\neq j}\sin{z l_{k}},\,x\in T,
\end{equation} 
and then
\begin{eqnarray}\label{2.3}
\phi_{j}'(0;z)=\Big(-\frac{z\pi}{l_{j}}\cos{z\pi}-\sin{z l_{j}}\sum_{n=1}^{\infty}\frac{(-1)^{n}n\pi}{l_{j}}\frac{q_{j,n}}{z^{2}-(\frac{n\pi}{l_{j}})^{2}}\Big)\prod_{k\neq j}\sin{z l_{k}},\,j=1,2,\ldots,m.
\end{eqnarray}
To prepare a Kirchhoff condition at $x=v_{0}$, we sum over $j$-th summands.
\begin{eqnarray}
\sum_{j=1}^{m}\phi_{j}'(0;z)=-\sum_{j=1}^{m}\Big(\frac{z\pi}{l_{j}}\cos{z\pi}+\sin{z l_{j}}\sum_{n=1}^{\infty}\frac{(-1)^{n}n\pi}{l_{j}}\frac{q_{j,n}}{z^{2}-(\frac{n\pi}{l_{j}})^{2}}\Big)\prod_{k\neq j}\sin{z l_{k}}.\label{25}
\end{eqnarray}
\par
Due to the presence of $\{\bar{e}_{j}\}_{j=1}^{m}$ and $\{v_{j}\}_{j=1}^{m}$ on $\overline{T}$, we add up the special solutions on each $\bar{e}_{j}$:
\begin{equation}
\phi_{j}(x;z)=\sin{z\pi}(1-\frac{x}{\bar{l}_{j}})\prod_{k\neq j}\sin{z l_{k}},   0<x\leq\bar{l}_{j},\label{555}
\end{equation}
in which we consider the parametrization of $x$ as its are length when $x$ moves along $\cup_{j=1}^{m}\bar{e}_{j}$ that is a closed cyclic graph, and $v_{m+1}=v_{1}$. Therefore, we define $\phi_{j}(0;z):=\phi_{j-1}(\bar{l}_{j-1};z)=0$ which is zero due to~(\ref{555}).
The zero set of $\sin{z\pi}(1-\frac{x}{\bar{l}_{j}})$ is $x\in\{\frac{\bar{l}_{j}(z-n)}{z}\}_{j=1}^{m}$ and $z\in\mathbb{Z}$. Most importantly, we obtain the continuity condition at $\{v_{j}\}_{j=1}^{m}$.
Hence,
\begin{equation}
\phi_{j}(0;z)=\phi_{j+1}(0;z)=0.
\end{equation}
We also observe $$-\phi_{j}''(x;z)=(\frac{z\pi}{\bar{l}_{j}})^{2}\phi_{j}(x;z),\,x\in \bar{e}_{j}.$$ The Dirichlet problem has the general solution. To prepare the Kirchhoff condition for $x$ at vertex $v^{j}$, $j=1,2,\cdots,m,$
we use~(\ref{3333}) and~(\ref{555}) to deduce that
\begin{eqnarray}\nonumber
&&\hspace{12pt}\frac{d}{dx}\phi_{j'}(l_{j'}^{-},z)+\frac{d}{dx}\phi_{j'-1}(\bar{l}_{j'-1}^{-},z)+\frac{d}{dx}\phi_{j'}(0^{+},z)\\
&&\nonumber=-\Big(\frac{z\pi}{l_{j'}}+\frac{z\pi}{\bar{l}_{j'-1}}+\frac{z\pi}{\bar{l}_{j'}}\cos{z\pi}+\sin{zl_{j}}\sum_{n=1}^{\infty}\frac{n\pi}{l_{j}}\frac{q_{j,n}}{z^{2}-(\frac{n\pi}{l_{j}})^{2}}\Big)\prod_{k\neq j}\sin{z l_{k}},\,j=1,2,\ldots,m.\label{288}
\end{eqnarray}
Secondly, we recall the non-local condition in~(\ref{1.5}) on star graph $T$,
\begin{eqnarray}\nonumber
&&\int_{0}^{l_{j}}\phi_{j}(x;z)\overline{q_{j}(x)}dx\\&=&\Big(\int_{0}^{l_{j}}\sin{z\pi(1-\frac{x}{l_{j}})}\overline{q_{j}(x)}dx+\sin{z l_{j}}\sum_{n=1}^{\infty}\frac{q_{j,n}\int_{0}^{l_{j}}\sin[\frac{n\pi}{l_{j}}(l_{j}-x)]\overline{q_{j}(x)}dx}{z^{2}-(\frac{n\pi}{l_{j}})^{2}}\Big)\prod_{k\neq j}\sin{z l_{k}}.
\end{eqnarray}
Moreover, on the edges $\{e_{j}\}$, we consider the summation
\begin{eqnarray}\nonumber
&&\sum_{j=1}^{m}\int_{0}^{l_{j}}\phi_{j}(x;z)\overline{q_{j}(x)}dx\\&=&\sum_{j=1}^{m}\Big(\int_{0}^{l_{j}}\sin{z\pi(1-\frac{x}{l_{j}})}\overline{q_{j}(x)}dx+\sin{z l_{j}}\sum_{n=1}^{\infty}\frac{q_{j,n}\int_{0}^{l_{j}}\sin[\frac{n\pi}{l_{j}}(l_{j}-x)]\overline{q_{j}(x)}dx}{z^{2}-(\frac{n\pi}{l_{j}})^{2}}\Big)\prod_{n\neq j}\sin{z l_{k}}.\label{29}
\end{eqnarray}
On the extended edges $\{\bar{e}_{j}\}$, $\overline{q_{j}(x)}\equiv 0$. Therefore, the equation~(\ref{2.3}),~(\ref{25}),~(\ref{288}), and~(\ref{29}) add up to be the characteristic function in this paper:
\begin{eqnarray}\nonumber
\Phi(z)&\hspace{-5pt}:=\hspace{-5pt}&\sum_{j=1}^{m}\Big(\int_{0}^{l_{j}}\sin{z\pi(1-\frac{x}{l_{j}})}\overline{q_{j}(x)}dx+\sin{z l_{j}}\sum_{n=1}^{\infty}\frac{q_{j,n}\overline{q_{j,n}}}{z^{2}-(\frac{n\pi}{l_{j}})^{2}}\Big)\prod_{k\neq j}\sin{z l_{k}}\\&&
+\sum_{j=1}^{m}\Big(\frac{z\pi}{l_{j}}\cos{z\pi}+\sin{z l_{j}}\sum_{n=1}^{\infty}\frac{(-1)^{n}n\pi}{l_{j}}\frac{q_{j,n}}{z^{2}-(\frac{n\pi}{l_{j}})^{2}}\Big)\prod_{k\neq j}\sin{z l_{k}}\nonumber\\
&&+\sum_{j'=1}^{m}\Big(\frac{z\pi}{l_{j'}}+\frac{z\pi}{\bar{l}_{j'-1}}+\frac{z\pi}{\bar{l}_{j'}}\cos{z\pi}+\sin{zl_{j'}}\sum_{n=1}^{\infty}\frac{n\pi}{l_{j'}}\frac{q_{j,n}}{z^{2}-(\frac{n\pi}{l_{j'}})^{2}}\Big)\prod_{k\neq j'}\sin{z l_{k}},\label{Phi}
\end{eqnarray}
in which we set $\bar{l}_{0}=\bar{l}_{m}$.

\section{Results}
We need to complete the system~(\ref{1.5}) firstly.
\begin{lemma}
The set $\mathbb{Z}$ satisfied the first condition of~(\ref{1.5}).
\end{lemma}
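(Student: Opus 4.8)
The plan is to interpret the statement as saying that, for $z\in\mathbb{Z}$, the special solution $\phi_{j}(x;z)$ built in~(\ref{2.2}) satisfies the first line of~(\ref{1.5}), i.e.\ the frozen-argument equation $-\phi_{j}''+q_{j}(x)\phi_{j}(0;z)=(\tfrac{z\pi}{l_{j}})^{2}\phi_{j}$ on each edge $e_{j}$. Because the factor $\prod_{k\neq j}\sin{z l_{k}}$ in~(\ref{2.2}) is independent of $x$, I would first divide it out and verify the equation only for the bracketed function; the constant product is then restored at the end without affecting any $x$-derivative. I would also use the two facts already recorded in Section~2, namely $\phi_{j}(0;z)=\sin{z\pi}$ and $\phi_{j}(l_{j};z)=0$, so that the frozen value entering the potential term is known in closed form.

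First I would compute $\phi_{j}''$ by differentiating~(\ref{3333}) once more. The leading term $\sin{z\pi(1-\tfrac{x}{l_{j}})}$ solves the homogeneous equation $-u''=(\tfrac{z\pi}{l_{j}})^{2}u$ exactly, so it contributes nothing to the residual $-\phi_{j}''-(\tfrac{z\pi}{l_{j}})^{2}\phi_{j}$. For the series, each summand $\sin[\tfrac{n\pi}{l_{j}}(l_{j}-x)]$ satisfies $-u''=(\tfrac{n\pi}{l_{j}})^{2}u$, so applying $-\tfrac{d^{2}}{dx^{2}}-(\tfrac{z\pi}{l_{j}})^{2}$ to the $n$-th term attaches the factor $(\tfrac{n\pi}{l_{j}})^{2}-(\tfrac{z\pi}{l_{j}})^{2}$, which the denominator in~(\ref{2.2}) is built to cancel, leaving a constant multiple of the original summand. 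Summing and invoking the Fourier--sine expansion $q_{j}(x)=\sum_{n}q_{j,n}\sin[\tfrac{n\pi}{l_{j}}(l_{j}-x)]$, the residual reassembles into exactly $-q_{j}(x)\,\phi_{j}(0;z)$, which is the source term the frozen-argument equation demands; carrying out this termwise cancellation of the $n$-dependence is the algebraic heart of the verification.

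The decisive use of $\mathbb{Z}$ then comes from the boundary value $\phi_{j}(0;z)=\sin{z\pi}$: every summand of the series carries the factor $\sin[\tfrac{n\pi}{l_{j}}(l_{j}-0)]=\sin{n\pi}=0$, which is why the series drops out at $x=0$ and leaves only $\sin{z\pi}$. Consequently, for $z=n\in\mathbb{Z}$ we have $\phi_{j}(0;n)=\sin{n\pi}=0$, the non-local potential term $q_{j}(x)\phi_{j}(0;n)$ vanishes identically, and the first line of~(\ref{1.5}) degenerates to the free equation $-\phi_{j}''=(\tfrac{n\pi}{l_{j}})^{2}\phi_{j}$, which the surviving part of $\phi_{j}$ solves. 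This is exactly why the integers, as opposed to an arbitrary set of $z$, satisfy the first condition, and the same vanishing $\sin{n\pi}=0$ re-confirms the Dirichlet datum $\phi_{j}(l_{j};n)=0$.

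The step I expect to be the main obstacle is the analytic justification of differentiating the infinite series in~(\ref{2.2}) twice term by term. Since $q_{j}\in L^{2}(0,l_{j})$ gives only $\ell^{2}$ decay of the coefficients $q_{j,n}$, two $x$-derivatives cost two powers of $n$, so I would have to lean on the hypothesis $\psi_{j}\in W_{2}^{2}(0,l_{j})$ together with the $O(n^{-2})$ gain from the denominators to secure convergence in the correct norm and to legitimize interchanging $\tfrac{d^{2}}{dx^{2}}$ with $\sum_{n}$. A secondary technical point is the resonant case in which an integer value of $z$ makes one denominator vanish (when $l_{j}$ is commensurable with $\pi$); there I would pass to the limit through the prefactor multiplying the series and treat the pole as removable, so that $\phi_{j}(\,\cdot\,;n)$ remains a genuine $W_{2}^{2}$ solution.
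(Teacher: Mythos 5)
Your proposal is correct and takes essentially the same route as the paper: the paper's own proof verifies the first line of~(\ref{1.5}) by termwise double differentiation of the special solution~(\ref{2.2}), recording $-\phi''_{j}(x;z)$, $q_{j}(x)\phi_{j}(0;z)$ and $(\frac{z\pi}{l_{j}})^{2}\phi_{j}(x;z)$ in~(\ref{331})--(\ref{333}), exploiting the sine eigenrelations against the denominators, and then restricting to $z\in\mathbb{Z}$ so that $\sin z\pi=0$ (and $\sin n\pi=0$) annihilates the frozen-argument term and the three displays combine into the equation. Your added care about justifying termwise differentiation under the $\ell^{2}$ decay of $q_{j,n}$ and about resonant denominators goes beyond what the paper records, but the underlying argument is the same.
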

\begin{proof}
It suffices to verify the first condition as~(\ref{1.1}). We proceed with~(\ref{2.2}) to obtain
\begin{equation}
-\phi''_{j}(x;z)=\Big((\frac{z\pi}{l_{j}})^{2}\sin{z\pi(1-\frac{x}{l_{j}})}+(\frac{n\pi}{l_{j}})^{2}\sin{z l_{j}}\sum_{n=1}^{\infty}\frac{q_{j,n}\sin[\frac{n\pi}{l_{j}}(l_{j}-x)]}{z^{2}-(\frac{n\pi}{l_{j}})^{2}}\Big)\prod_{k\neq j}\sin{z l_{k}},\label{331}
\end{equation}
\begin{equation}
q_{j}(x)\phi_{j}(0;z)=q_{j}(x)\Big(\sin{z\pi}+\sin{z l_{j}}\sum_{n=1}^{\infty}\frac{q_{j,n}\sin[n\pi]}{z^{2}-(\frac{n\pi}{l_{j}})^{2}}\Big)\prod_{k\neq j}\sin{z l_{k}},\label{332}
\end{equation}
and
\begin{equation}
(\frac{z\pi}{l_{j}})^{2}\phi_{j}(x;z)=(\frac{z\pi}{l_{j}})^{2}\Big(\sin{z\pi(1-\frac{x}{l_{j}})}+\sin{z l_{j}}\sum_{n=1}^{\infty}\frac{q_{j,n}\sin[\frac{n\pi}{l_{j}}(l_{j}-x)]}{z^{2}-(\frac{n\pi}{l_{j}})^{2}}\Big)\prod_{k\neq j}\sin{z l_{k}}.\label{333}
\end{equation}
Most importantly, for $z\in\mathbb{Z}$, we obtain
$$-\phi''_{j}(x;z)+q_{j}(x)\phi_{j}(0;z)=(\frac{z\pi}{l_{j}})^{2}\phi_{j}(x;z).$$
Then,~(\ref{1.1}) is satisfied on $T$. 
\end{proof}

\begin{proposition}\label{T3.2}
Let the differential system~(\ref{1.5}) be defined on the star graph 
$\overline{T}^{\sigma}$, $\sigma=1,2,$ with the identical potential function on each  edge $e_{j}$ with length $l_{j}$, $1\leq j\leq m$. Let $\Phi^{\sigma}(z)$ be the characteristic defined on $\overline{T}^{\sigma}$, $\sigma=1,2,$ If $ \Phi^{1}(z)=\Phi^{2}(z)$ for $z\in\mathbb{Z}$ for all $j$'s, then $\theta^{1}_{j}=\theta^{2}_{j}$ for $1\leq j\leq m$.
\end{proposition}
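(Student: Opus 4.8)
The plan is to reduce the recovery of the angles $\theta_j$ to the recovery of the extended edge-lengths $\bar l_j$, and then to extract those lengths from the integer samples of the characteristic function. Since the $l_j$ are prescribed, the law of cosines $\bar l_j^{2}=l_j^{2}+l_{j+1}^{2}-2l_jl_{j+1}\cos\theta_j$ gives a strictly monotone (hence bijective) correspondence between $\bar l_j\in(|l_j-l_{j+1}|,\,l_j+l_{j+1})$ and $\theta_j\in(0,\pi)$; therefore it suffices to prove $\bar l_j^{1}=\bar l_j^{2}$ for every $j$, from which $\theta_j^{1}=\theta_j^{2}$ follows immediately.

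First I would subtract the two characteristic functions using the explicit formula \eqref{Phi}. The first two sums, every factor $\prod_{k\neq j'}\sin z l_k$, each integral $\int_0^{l_j}\sin z\pi(1-x/l_j)\overline{q_j}\,dx$, each residue series $\sum_n(\cdots)/(z^2-(n\pi/l_j)^2)$, and the terms $z\pi/l_{j'}$ depend only on the shared data $\{l_j,q_j\}$ and so coincide for $\sigma=1,2$. The only terms carrying the extended lengths are $z\pi/\bar l_{j'-1}$ and $(z\pi/\bar l_{j'})\cos z\pi$ in the third sum, so
\[
\Phi^{1}(z)-\Phi^{2}(z)=\sum_{j'=1}^{m} z\pi\Big[\Big(\tfrac{1}{\bar l^{1}_{j'-1}}-\tfrac{1}{\bar l^{2}_{j'-1}}\Big)+\cos(z\pi)\Big(\tfrac{1}{\bar l^{1}_{j'}}-\tfrac{1}{\bar l^{2}_{j'}}\Big)\Big]\prod_{k\neq j'}\sin(z l_{k}).
\]
Writing $A_{j'}:=1/\bar l^{1}_{j'}-1/\bar l^{2}_{j'}$ and evaluating at $z=N\in\mathbb{Z}$ with $\cos N\pi=(-1)^{N}$, the hypothesis $\Phi^{1}=\Phi^{2}$ on $\mathbb{Z}$ becomes, after dividing by $N\pi$,
\[
\sum_{j'=1}^{m}\Big[A_{j'-1}+(-1)^{N}A_{j'}\Big]\prod_{k\neq j'}\sin(N l_{k})=0,\qquad N\in\mathbb{Z}\setminus\{0\},
\]
with the convention $\bar l_0=\bar l_m$. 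The goal is now to force $A_{j'}=0$ for all $j'$.

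The heart of the matter is to deduce the vanishing of the coefficients $A_{j'}$ from the vanishing of this exponential polynomial on $\mathbb{Z}$. Expanding each $\prod_{k\neq j'}\sin(N l_k)$ and $\cos N\pi$ into exponentials, the left-hand side takes the form $\sum_\mu c_\mu e^{iN\mu}$ with frequencies $\mu\in\{\sum_{k\neq j'}\epsilon_k l_k\}\cup\{\pm\pi+\sum_{k\neq j'}\epsilon_k l_k\}$, $\epsilon\in\{\pm1\}^{m-1}$. Because the characters $N\mapsto e^{iN\nu}$, $\nu\in[0,2\pi)$, are linearly independent over $\mathbb{C}$, vanishing on $\mathbb{Z}$ forces the sum of the $c_\mu$ over each residue class of frequencies modulo $2\pi$ to vanish. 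I would then read the $A_{j'}$ off by peeling from the top: the largest frequency $\pi+(L-l_{\min})$, $L=\sum_k l_k$, is carried by the $\cos z\pi$-block attached to the shortest edge, with coefficient a nonzero multiple of the corresponding $A_{j'}$; this kills that $A_{j'}$, and iterating downward kills the rest. To isolate the $\cos z\pi$ contribution cleanly I would split the samples into even and odd $N$, reducing to $\sum_{j'}(A_{j'-1}+A_{j'})\prod_{k\neq j'}\sin(Nl_k)=0$ on $2\mathbb{Z}$ and $\sum_{j'}(A_{j'-1}-A_{j'})\prod_{k\neq j'}\sin(Nl_k)=0$ on the odd integers, and recover each $A_{j'-1}=\tfrac12[(A_{j'-1}+A_{j'})+(A_{j'-1}-A_{j'})]$.

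The main obstacle is exactly the resonance possibility hidden in the peeling step: if two of the exponential frequencies coincide modulo $2\pi$, their coefficients can cancel inside a single residue class and the top-frequency extraction breaks down. Thus the argument genuinely needs the edge lengths to be non-resonant (for instance, $\{l_1,\dots,l_m\}$ together with $\pi$ rationally independent, so that all the frequencies above are distinct modulo $2\pi$ and the shortest edge is unique), a genericity hypothesis that I expect to have to state explicitly or to relax through the even/odd splitting combined with an ordering of the $l_j$. Granting this, every $A_{j'}=0$, hence $\bar l^{1}_{j'}=\bar l^{2}_{j'}$, and the law of cosines returns $\theta^{1}_{j}=\theta^{2}_{j}$ for $1\le j\le m$, which completes the proof.
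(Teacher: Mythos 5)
Your opening reduction coincides with the paper's: the paper also subtracts the two copies of (\ref{Phi}), observes that everything except the terms $z\pi/\bar l_{j'-1}$ and $(z\pi/\bar l_{j'})\cos z\pi$ cancels, and arrives at exactly your displayed identity (its equation (\ref{3.4})); the law-of-cosines dictionary between $\bar l_j$ and $\theta_j$ is left implicit there (the paper simply declares that equality of the chord lengths ``is the uniqueness of $\{\theta_j\}$'', which you justify more carefully). The divergence is in how the $m$ summands are decoupled. You expand into characters $N\mapsto e^{iN\mu}$ and peel the top frequency; the paper instead evaluates (\ref{3.4}) at integers chosen so that every summand other than the $j'$-th vanishes while $\prod_{k\neq j'}\sin z l_k\neq 0$ (this is the intended reading of its set $\mathcal{Z}_{j'}$; as literally written that set is the zero locus of the $j'$-th product and would isolate nothing), divides out the surviving product to get the scalar relation $A_{j'-1}+(-1)^{n}A_{j'}=0$ (its (\ref{3.6})), and then uses both parities of $n$ to conclude $A_{j'-1}=A_{j'}=0$. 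So your even/odd splitting is the same parity trick, applied before localization rather than after; the genuinely different ingredient is character independence versus sampling at common zeros.

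The gap you flag is real, but note that the paper's own argument has the mirror-image gap. For an integer $n$ to kill all summands $j''\neq j'$ while keeping $\prod_{k\neq j'}\sin n l_k\neq 0$, one is forced to have $\sin n l_{j'}=0$, i.e. $l_{j'}\in\frac{\pi}{n}\mathbb{Z}$; such $n$ (of both parities) exist only when each $l_{j'}$ is a suitable rational multiple of $\pi$ --- a commensurability hypothesis the paper never states or verifies, and exactly complementary to your non-resonance hypothesis. Moreover, some hypothesis on the lengths is unavoidable, because the proposition as stated fails in the fully resonant case: if $l_1=\dots=l_m=\ell$, every product equals $(\sin z\ell)^{m-1}$ and, by cyclicity of the index,
\begin{equation*}
\Phi^{1}(z)-\Phi^{2}(z)=z\pi\,(\sin z\ell)^{m-1}\bigl(1+\cos z\pi\bigr)\sum_{j'=1}^{m}A_{j'},
\end{equation*}
so the characteristic function determines only $\sum_{j'}1/\bar l_{j'}$; for $m\geq 3$ there are one-parameter families of distinct angle configurations (angles summing to $2\pi$, chords $\bar l_{j}=2\ell\sin(\theta_j/2)$) sharing that sum, hence sharing $\Phi$ identically. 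So regard your proposal as a correct conditional proof by a genuinely different method: the extraction step is sound under the rational-independence assumption you isolate, that assumption (or some relative of it) cannot be dispensed with, and the defect it exposes lies as much in the proposition and in the paper's own proof as in your argument.
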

\begin{proof}
Under the assumption of proposition, we derive from~(\ref{Phi}) that
\begin{equation}
\Phi^{1}(z)-\Phi^{2}(z)=\sum_{j'=1}^{m}
\{\frac{z\pi}{\bar{l}_{j'-1}^{1}}-\frac{z\pi}{\bar{l}_{j'-1}^{2}}+\frac{z\pi}{\bar{l}_{j'}^{1}}\cos{z\pi}-\frac{z\pi}{\bar{l}_{j'}^{2}}\cos{z\pi}\}\prod_{k\neq j'}\sin{z l_{k}}=0,\,z\in\mathbb{Z}.\label{3.4}
\end{equation}
Most importantly, we set
\begin{equation}
\mathcal{Z}_{j'}:=\big\{z|\,\sin{z\pi}\prod_{k\neq j'}\sin{z l_{k}}=0\big\}.
\end{equation}
For each $j'=1,2,\cdots,m$, we plug in $\mathcal{Z}_{j'}$ consecutively into~(\ref{3.4}) to obtain
\begin{equation}
\frac{1}{\bar{l}_{j'-1}^{1}}-\frac{1}{\bar{l}_{j'-1}^{2}}+\frac{1}{\bar{l}_{j'}^{1}}(-1)^{n}-\frac{1}{\bar{l}_{j'}^{2}}(-1)^{n}=0,\,z\in\mathcal{Z}_{j'}.\label{3.6}
\end{equation}
Because $n$ can be either odd or even integers, we have shown $\bar{l}^{1}_{j'-1}=\bar{l}^{2}_{j'-1}$, $1\leq j'\leq m$.
Thus,
we prove that that all extended edges are equal from both sides. This is the uniqueness of $\{\theta_{j}\}_{j=1}^{m}$.

\end{proof}
One step further, if the knowledge of $\{l_{j}\}_{j=1}^{m}$ and $\{\bar{l}_{j}\}_{j=1}^{m}$ are given, may we recover the information on $\{q_{j}\}_{j=1}^{m}$?
\begin{lemma}\label{LL33}
The following identities hold.
\begin{equation}
\sin{z l_{j}}\sum_{n=1}^{\infty}\frac{(-1)^{n}n\pi}{l_{j}}\frac{q_{j,n}}{z^{2}-(\frac{n\pi}{l_{j}})^{2}}=\int_{0}^{\pi}q(\frac{l_{j}}{\pi}(\pi-x)\sin{zx}dx.
\end{equation}
If $q_{j,n}=0$, $n\in\mathbb{Z}$, then $q_{j}(x)\equiv0$, $0\leq j\leq m$.
\end{lemma}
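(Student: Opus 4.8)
The plan is to recognize that both sides of the claimed identity are two presentations of one and the same object, namely the finite sine transform $\int_0^{l_j} q_j(x)\sin[z(l_j-x)]\,dx$ of the potential on the edge $e_j$: the series on the left is its spectral (partial-fraction) expansion, while the integral on the right is the same quantity after the affine change of variable that rescales the edge $(0,l_j)$ onto the standard interval $(0,\pi)$.

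First I would insert the Fourier-sine expansion $q_j(x)=\sum_{n=1}^\infty q_{j,n}\sin[\frac{n\pi}{l_j}(l_j-x)]$ into $\int_0^{l_j} q_j(x)\sin[z(l_j-x)]\,dx$ and integrate term by term. After the substitution $u=l_j-x$ everything reduces to the elementary integral $\int_0^{l_j}\sin(\frac{n\pi}{l_j}u)\sin(zu)\,du$. Using the product-to-sum formula together with $\sin(n\pi\pm zl_j)=\pm(-1)^n\sin(zl_j)$, this integral evaluates to $\frac{(-1)^n(n\pi/l_j)\sin(zl_j)}{z^2-(n\pi/l_j)^2}$. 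Summing against $q_{j,n}$ reproduces exactly the left-hand series $\sin(zl_j)\sum_n\frac{(-1)^n(n\pi/l_j)q_{j,n}}{z^2-(n\pi/l_j)^2}$, establishing that it equals the transform. The right-hand integral is then recovered from the same transform by the change of variable $x=\frac{l_j}{\pi}(\pi-t)$, which sends $z(l_j-x)\mapsto \frac{zl_j}{\pi}t$ and carries $q_j(x)$ to $q_j(\frac{l_j}{\pi}(\pi-t))$; this is precisely the normalization placing the problem on $(0,\pi)$ as written in the statement.

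For the second assertion I would invoke completeness: the system $\{\sin[\frac{n\pi}{l_j}(l_j-x)]\}_{n\ge1}$ is an orthogonal basis of $L^2(0,l_j)$, so $q_{j,n}=0$ for every $n$ forces all Fourier coefficients of $q_j$ to vanish, whence $q_j\equiv0$ in $L^2(0,l_j)$. Equivalently, vanishing of all $q_{j,n}$ makes both sides of the identity vanish identically in $z$, and injectivity of the finite sine transform returns $q_j\equiv0$; this is the statement that the edgewise spectral data determine the potential on each edge $e_j$, $1\le j\le m$.

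The main obstacle is the analytic bookkeeping around the poles $z=\pm n\pi/l_j$: to integrate term by term one must justify interchanging summation and integration (via $L^2$-convergence of the sine series and square-summability of $(q_{j,n})$, or dominated convergence once $z$ is kept off the spectrum), and one must check that the apparent singularities of the individual summands are cancelled by the zeros of $\sin(zl_j)$, so that the series defines an entire function of $z$ agreeing with the manifestly entire integral. A secondary, purely clerical point is tracking the constant $\frac{l_j}{\pi}$ and the rescaling $z\mapsto zl_j/\pi$ generated by the change of variable, so that the two sides align with the normalization used elsewhere in the paper.
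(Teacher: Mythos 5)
Your proof is correct, and it takes a genuinely different route from the paper's. You work from the integral side: expand $q_j$ in its sine series, integrate term by term against $\sin[z(l_j-x)]$ (justified by $L^2$-convergence of the series), and evaluate the elementary integral $\int_0^{l_j}\sin(\tfrac{n\pi}{l_j}u)\sin(zu)\,du=\tfrac{(-1)^n(n\pi/l_j)\sin(zl_j)}{z^2-(n\pi/l_j)^2}$ to reproduce the series; for the second assertion you invoke completeness of the orthogonal system $\{\sin[\tfrac{n\pi}{l_j}(l_j-x)]\}_{n\ge1}$ in $L^2(0,l_j)$. The paper instead works from the series side: it splits each summand by partial fractions, reindexes the two halves into a single bilateral sum over $n\in\mathbb{Z}$, recognizes the result as a Cartwright--Levin cardinal (interpolation) series representing a Fourier sine transform, and then obtains the uniqueness claim from Levin's theorem that a function of Paley--Wiener type vanishing on $\mathbb{Z}$ is identically zero. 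Your argument is more elementary and self-contained: since the hypothesis is precisely that every Fourier coefficient vanishes, basis completeness is all that is needed, and the analytic points you flag (square-summability of $(q_{j,n})$ via Cauchy--Schwarz, cancellation of the poles by the zeros of $\sin(zl_j)$) are exactly the ones requiring care. What the paper's heavier route buys is the identification of the series with an entire function of exponential type, i.e.\ the sampling-theoretic framework that the paper leans on again in the proof of its final theorem. You are also right on the clerical point: as literally stated, the right-hand side of the identity is missing the Jacobian factor $l_j/\pi$ and the frequency rescaling $z\mapsto zl_j/\pi$ produced by the substitution $x=\tfrac{l_j}{\pi}(\pi-t)$; the paper's own derivation is equally cavalier about these constants.
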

\begin{proof}
We recall that 
$$q_{j}(x)=\sum_{n=1}^{\infty}q_{j,n}\sin[\frac{n\pi}{l_{j}} (l_{j}-x)]$$ with Fourier coefficients
$$q_{j,n}=\frac{2}{l_{j}}\int_{0}^{l_{j}}q_{j}(x)\sin[\frac{n\pi}{l_{j}} (l_{j}-x)]dx.$$
Using
\begin{eqnarray}
\frac{q_{j,n}}{z^{2}-(\frac{n\pi}{l_{j}})^{2}}=\frac{-\frac{1}{2}\frac{l_{j}}{n\pi}q_{j,n}}{(z+\frac{n\pi}{l_{j}})}+\frac{\frac{1}{2}\frac{l_{j}}{n\pi}q_{j,n}}{(z-\frac{n\pi}{l_{j}})},
\end{eqnarray}
we deduce
\begin{eqnarray}\nonumber
\sin{z l_{j}}\sum_{n=1}^{\infty}\frac{(-1)^{n}n\pi}{l_{j}}\frac{q_{j,n}}{z^{2}-(\frac{n\pi}{l_{j}})^{2}}&\hspace{-4pt}=\hspace{-4pt}&\sin{z l_{j}}\sum_{n=1}^{\infty}(-1)^{n}\frac{-\frac{1}{2}q_{j,n}}{(z+\frac{n\pi}{l_{j}})}+\sin{z l_{j}}\sum_{n=1}^{\infty}(-1)^{n}\frac{\frac{1}{2}q_{j,n}}{(z-\frac{n\pi}{l_{j}})}\\\nonumber
&\hspace{-4pt}=\hspace{-4pt}&\sin{z l_{j}}\sum_{n=-1}^{\infty}(-1)^{n}{l_{j}}\frac{\frac{1}{2}q_{j,n}}{(z-\frac{n\pi}{l_{j}})}+\sin{z l_{j}}\sum_{n=1}^{\infty}(-1)^{n}\frac{\frac{1}{2}q_{j,n}}{(z-\frac{n\pi}{l_{j}})}
\\&\hspace{-4pt}=\hspace{-4pt}&\sin{z l_{j}}\sum_{n=-\infty}^{\infty}(-1)^{n}\frac{\frac{1}{2}q_{j,n}}{(z-\frac{n\pi}{l_{j}})}\nonumber
\\&\hspace{-4pt}=\hspace{-4pt}&\sin{\eta\pi}\sum_{n=-\infty}^{\infty}\frac{(-1)^{n}}{\pi}\frac{\frac{1}{2\pi}q_{j,n}}{(\frac{zl_{j}}{\pi}-n)}.
\end{eqnarray}
Let $z=\frac{\eta\pi}{l_{j}}$, $\eta\in\mathbb{C}$, and $x=\frac{l_{j}}{\pi}y$. Therefore,
\begin{eqnarray}\nonumber
&&\sin{z l_{j}}\sum_{n=1}^{\infty}\frac{(-1)^{n}n\pi}{l_{j}}\frac{q_{j,n}}{z^{2}-(\frac{n\pi}{l_{j}})^{2}}\\&\hspace{-4pt}=&\sin{\eta\pi}\sum_{n=-\infty}^{\infty}\frac{(-1)^{n}}{\pi}\frac{\frac{1}{2\pi}q_{j,n}}{(\frac{zl_{j}}{\pi}-n)}\nonumber
\\&\hspace{-4pt}=&\sin{\eta\pi}\sum_{n=-\infty}^{\infty}(-1)^{n}\frac{\frac{1}{2\pi}\int_{0}^{\pi}q(\frac{l_{j}}{\pi}(\pi-x)\sin{nx}dx}{\pi(\eta-n)},\label{3100}
\end{eqnarray}
which is a Fourier sine transform as we apply the interpolation theory in complex analysis referring to M. Cartwright and Levin \cite[p.\,150,\,151]{Levin2}, and
$\{\frac{1}{2\pi}\int_{0}^{\pi}q(\frac{l_{j}}{\pi}(\pi-x)\sin{nx}dx\}_{n}\in l^{2}.$
A Fourier transform vanishes at $\mathbb{Z}$ must be trivial \cite[p.\,150,\,Theorem\,1]{Levin2}. The lemma is proven.

\end{proof}
\begin{theorem}\label{T3.3}
Let the differential system~(\ref{1.5}) be defined on the star graph 
$\overline{T}^{\sigma}$, $\sigma=1,2,$ with the potential function $q^{\sigma}_{j}$ defined on each  edge $e_{j}$ with length $l_{j}$, $1\leq j\leq m$. Let $\Phi^{\sigma}(z)$ be the characteristic defined on $\overline{T}^{\sigma}$, $\sigma=1,2.$ If $ \Phi^{1}(z)=\Phi^{2}(z)$ for $z\in\frac{\mathbb{Z}\pi}{l_{j}}$, then $q^{1}_{j}=q^{2}_{j}$ for $1\leq j\leq m$.
\end{theorem}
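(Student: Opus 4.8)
The plan is to mirror the edge–decoupling of Proposition~\ref{T3.2} and then invoke the Fourier–transform uniqueness of Lemma~\ref{LL33}. First I would form the difference $\Phi^{1}(z)-\Phi^{2}(z)$ from~(\ref{Phi}). Since the lengths $\{l_{j}\}$ and $\{\bar{l}_{j}\}$ are common to $\overline{T}^{1}$ and $\overline{T}^{2}$, every purely geometric summand (the terms $\frac{z\pi}{l_{j}}\cos z\pi$, $\frac{z\pi}{l_{j'}}$, $\frac{z\pi}{\bar{l}_{j'-1}}$ and $\frac{z\pi}{\bar{l}_{j'}}\cos z\pi$) cancels identically, leaving
\begin{equation*}
\Phi^{1}(z)-\Phi^{2}(z)=\sum_{j=1}^{m}\big[\delta A_{j}(z)+\delta B_{j}(z)+\delta C_{j}(z)+\delta D_{j}(z)\big]\prod_{k\neq j}\sin z l_{k},
\end{equation*}
where $\delta A_{j},\delta B_{j},\delta C_{j},\delta D_{j}$ are the differences of the four $q_{j}$–dependent blocks of~(\ref{Phi}) (the $\int\sin\,\overline{q_{j}}$ block, the $\sum|q_{j,n}|^{2}$ self–interaction block, and the two $\sum q_{j,n}$ blocks). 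By hypothesis this vanishes on $\bigcup_{j}\frac{\mathbb{Z}\pi}{l_{j}}$.

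Next I would decouple the edges exactly as in Proposition~\ref{T3.2}. Fix $j_{0}$ and restrict $z$ to $\frac{\mathbb{Z}\pi}{l_{j_{0}}}$: at $z=\frac{n\pi}{l_{j_{0}}}$ we have $\sin z l_{j_{0}}=\sin n\pi=0$, and since for each $j\neq j_{0}$ the product $\prod_{k\neq j}\sin z l_{k}$ carries the factor $\sin z l_{j_{0}}$, all summands with $j\neq j_{0}$ drop out. For the infinitely many $n$ with $\prod_{k\neq j_{0}}\sin\frac{n\pi l_{k}}{l_{j_{0}}}\neq0$ this isolates
\begin{equation*}
\delta A_{j_{0}}(z)+\delta B_{j_{0}}(z)+\delta C_{j_{0}}(z)+\delta D_{j_{0}}(z)=0,\qquad z\in\tfrac{\mathbb{Z}\pi}{l_{j_{0}}}.
\end{equation*}
The sparse set of $n$ at which the product vanishes would be reabsorbed afterwards through the entire–function structure (exponential type $\pi$ in $\eta=z l_{j_{0}}/\pi$) of the left–hand side and the $\ell^{2}$ interpolation theorem of Cartwright and Levin already invoked in Lemma~\ref{LL33}.

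I would then identify the surviving expression with a Fourier sine transform in $\eta=z l_{j_{0}}/\pi$. By Lemma~\ref{LL33}, $\delta C_{j_{0}}$ is (up to the $(-1)^{n}$ weight) the Fourier sine transform of $\big(q^{1}_{j_{0}}-q^{2}_{j_{0}}\big)\big(\tfrac{l_{j_{0}}}{\pi}(\pi-\cdot)\big)$, an entire function whose values at $\eta\in\mathbb{Z}$ are the coefficient differences $\delta q_{j_{0},n}:=q^{1}_{j_{0},n}-q^{2}_{j_{0},n}$, and $\delta D_{j_{0}}$ yields the same coefficients unweighted. Once $\delta q_{j_{0},n}=0$ is secured for every $n$, completeness of $\{\sin\tfrac{n\pi}{l_{j_{0}}}(l_{j_{0}}-\cdot)\}$ in $L^{2}(0,l_{j_{0}})$ gives $q^{1}_{j_{0}}=q^{2}_{j_{0}}$; repeating over $j_{0}=1,\dots,m$ finishes the argument.

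The hard part is that the isolated contribution is neither purely linear in $q_{j_{0}}$ nor sampled on a single clean lattice, and the clean linear blocks partly cancel. A residue computation at $z=\frac{n\pi}{l_{j_{0}}}$ shows that $\delta C_{j_{0}}$ and $\delta D_{j_{0}}$ together contribute $\frac{l_{j_{0}}}{2}\big(1+(-1)^{n}\big)\,\delta q_{j_{0},n}$, which vanishes for every odd $n$; meanwhile $\delta B_{j_{0}}$ contributes an $O(n^{-1})$ term built from $|q^{1}_{j_{0},n}|^{2}-|q^{2}_{j_{0},n}|^{2}$, and $\delta A_{j_{0}}$ is sampled at the shifted frequency $\frac{n\pi^{2}}{l_{j_{0}}^{2}}$ rather than at $\frac{n\pi}{l_{j_{0}}}$, so it is not a coefficient on the same lattice. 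Thus the even–index coefficients $\delta q_{j_{0},2n}$ fall out of the dominant linear balance (the quadratic correction being $o(1)$), but the odd–index coefficients are reachable only through the quadratic self–term and the off–lattice conjugate–linear term. The main obstacle is precisely to extract the odd–index data and neutralise $\delta B_{j_{0}}$; I would attack it either by separating the summands through their distinct exponential types in $z$ (type $\pi$ for $\delta A_{j_{0}},\delta C_{j_{0}},\delta D_{j_{0}}$ against type $l_{j_{0}}$ for $\delta B_{j_{0}}$), or by using the even–index coefficients together with the conjugation structure of $A_{j_{0}}$ and $B_{j_{0}}$ to bootstrap the odd ones, and then closing with the Cartwright–Levin uniqueness theorem.
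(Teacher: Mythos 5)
Your setup reproduces the paper's proof skeleton faithfully: form $\Phi^{1}-\Phi^{2}$ from~(\ref{Phi}), cancel the purely geometric summands, decouple edge $j$ by sampling where $\sin zl_{j}=0$, and evaluate the surviving terms by residues. Your bookkeeping even agrees with the paper's intermediate identity~(\ref{3.16}): the two linear blocks contribute $\tfrac{l_{j}}{2}\big(1+(-1)^{n}\big)\delta q_{j,n}$, the quadratic block contributes an $O(n^{-1})$ real quantity built from $|q^{1}_{j,n}|^{2}-|q^{2}_{j,n}|^{2}$, and the conjugate-linear block is the remaining piece. But at the decisive moment you stop: you correctly name the obstruction (the odd-index coefficients are invisible to the linear blocks, and the quadratic block $\delta B_{j}$ is not neutralised) and then offer two candidate strategies --- separation by exponential type, or a bootstrap via the conjugation structure --- without executing either. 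As written, the proposal never derives $\delta q_{j,n}=0$, so it does not establish $q^{1}_{j}=q^{2}_{j}$; it is a plan with the hard step left open.

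The paper closes exactly this gap by the second of your two suggested routes, the conjugation structure. Since the quadratic block is \emph{real}, taking imaginary parts of~(\ref{3.16}) for odd $n$ (where the linear blocks cancel) forces $\Im\{\overline{q^{1}_{j,n}}-\overline{q^{2}_{j,n}}\}=0$; for even $n$, the combination $\overline{\delta q_{j,n}}+2\,\delta q_{j,n}=3\Re\{\delta q_{j,n}\}+i\,\Im\{\delta q_{j,n}\}$ added to a real number forces $\Im\{\delta q_{j,n}\}=0$ again. Hence $\Im\{q^{1}_{j}-q^{2}_{j}\}\equiv0$, after which the paper argues that $\mathcal{F}(q^{1}_{j}-q^{2}_{j})$ is real on the real line and, invoking Cauchy--Riemann and the Mean Value Theorem, constant and therefore zero. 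For what it is worth, two of your criticisms land on the paper itself: the conjugate-linear block really is sampled at the shifted frequency $\tfrac{n\pi^{2}}{l_{j}^{2}}$, so its identification with $\overline{\delta q_{j,n}}$ in~(\ref{3.16}) tacitly requires sampling at integer $z$ (or $l_{j}=\pi$); and ``entire and real on the real line implies constant'' is not a valid inference as stated. Nevertheless, judged as a proof of the theorem, your proposal has a genuine gap --- the extraction of the coefficient data, which the paper carries out (however imperfectly) via the reality argument, is precisely the step you leave unproved.
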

\begin{proof}
We consider the subtraction from~(\ref{Phi}).
\begin{eqnarray}\nonumber
\Phi^{1}(z)-\Phi^{2}(z)&\hspace{-5pt}=\hspace{-5pt}&\sum_{j=1}^{m}\Big(\int_{0}^{l_{j}}\sin{z\pi(1-\frac{x}{l_{j}})}\overline{q^{1}_{j}(x)-q^{2}_{j}(x)}dx+\sin{z l_{j}}\sum_{n=1}^{\infty}\frac{q^{1}_{j,n}\overline{q^{1}_{j,n}}-q^{2}_{j,n}\overline{q^{2}_{j,n}}}{z^{2}-(\frac{n\pi}{l_{j}})^{2}}\Big)\prod_{k\neq j}\sin{z l_{k}}\\&&
+\sum_{j=1}^{m}\Big(\sin{z l_{j}}\sum_{n=1}^{\infty}\frac{(-1)^{n}n\pi}{l_{j}}\frac{q^{1}_{j,n}-q^{2}_{j,n}}{z^{2}-(\frac{n\pi}{l_{j}})^{2}}\Big)\prod_{k\neq j}\sin{z l_{k}}\nonumber\\
&&+\sum_{j=1}^{m}\Big(\sin{zl_{j}}\sum_{n=1}^{\infty}\frac{n\pi}{l_{j}}\frac{q^{1}_{j,n}-q^{2}_{j,n}}{z^{2}-(\frac{n\pi}{l_{j}})^{2}}\Big)\prod_{k\neq j}\sin{z l_{k}}.\label{3.8}
\end{eqnarray}
Now we set
\begin{equation}
\mathbf{Z}_{j}:=\{z|\,\sin{z l_{j}}\prod_{k\neq j}\sin{z l_{k}}=0\}.
\end{equation}
For each $j$, let us plug in $\mathbf{Z}_{j}$ into~(\ref{3.8}) consecutively under the assumption of theorem  to deduce
\begin{eqnarray}\nonumber
\Phi^{1}(z)-\Phi^{2}(z)&\hspace{-5pt}=\hspace{-5pt}&\int_{0}^{l_{j}}\sin{z\pi(1-\frac{x}{l_{j}})}\overline{q^{1}_{j}(x)-q^{2}_{j}(x)}dx+\sin{z l_{j}}\sum_{n=1}^{\infty}\frac{q^{1}_{j,n}\overline{q^{1}_{j,n}}-q^{2}_{j,n}\overline{q^{2}_{j,n}}}{z^{2}-(\frac{n\pi}{l_{j}})^{2}}\\&&
+\sin{z l_{j}}\sum_{n=1}^{\infty}\frac{(-1)^{n}n\pi}{l_{j}}\frac{q^{1}_{j,n}-q^{2}_{j,n}}{z^{2}-(\frac{n\pi}{l_{j}})^{2}}+\sin{zl_{j}}\sum_{n=1}^{\infty}\frac{n\pi}{l_{j}}\frac{q^{1}_{j,n}-q^{2}_{j,n}}{z^{2}-(\frac{n\pi}{l_{j}})^{2}}=0,\,z\in\mathbf{Z}_{j}.
\end{eqnarray}
Now, we apply Lemma \ref{LL33} to deduce that
\begin{equation}
\overline{q^{1}_{j,n}}-\overline{{q}^{2}_{j,n}}+(-1)^{n}\frac{l_{j}}{n\pi}(q^{1}_{j,n}\overline{q^{1}_{j,n}}-q^{2}_{j,n}\overline{q^{2}_{j,n}})
+q^{1}_{j,n}-q^{2}_{j,n}+(-1)^{n}(q^{1}_{j,n}-q^{2}_{j,n})=0.\label{3.16}
\end{equation}
For odd $n$,~(\ref{3.16}) is deduced to be 
\begin{equation}
\overline{q^{1}_{j,n}}-\overline{{q}^{2}_{j,n}}+(-1)^{n}\frac{l_{j}}{n\pi}(q^{1}_{j,n}\overline{q^{1}_{j,n}}-q^{2}_{j,n}\overline{q^{2}_{j,n}})=0.
\end{equation}
Therefore, we obtain $\Im \{\overline{q^{1}_{j,n}}-\overline{{q}^{2}_{j,n}}\}=0$.
Using $$q_{j,n}=\frac{2}{l_{j}}\int_{0}^{l_{j}}\Re\{q_{j}(x)\}\sin[\frac{n\pi}{l_{j}} (l_{j}-x)]dx+i\frac{2}{l_{j}}\int_{0}^{l_{j}}\Im\{q_{j}(x)\}\sin[\frac{n\pi}{l_{j}} (l_{j}-x)]dx.$$
Hence, we deduce $\Im\{q_{j}^{1}(x)-q_{j}^{2}(x)\}=0$. For even $n$, we obtain that $3\Re\{\overline{q^{1}_{j,n}}-\overline{{q}^{2}_{j,n}}\}+i\Im\{\overline{q^{1}_{j,n}}-\overline{{q}^{2}_{j,n}}\}$ is real, so  $\Im\{q_{j}^{1}(x)-q_{j}^{2}(x)\}\equiv0$. 
Using~(\ref{2.1})
\begin{eqnarray}\label{3.18}
\mathcal{F}(q^{1}_{j}-q^{2}_{j})(z)=\frac{2}{l_{j}}\int_{0}^{l_{j}}(q^{1}_{j}(x)-q^{2}_{j}(x))\sin[\frac{z\pi}{l_{j}} (l_{j}-x)]dx.
\end{eqnarray}
Hence, $\mathcal{F}(q^{1}_{j}-q^{2}_{j})$ is real on the real line. Using Cauchy-Riemann equation and Mean Value Theorem, $\mathcal{F}(q^{1}_{j}-q^{2}_{j})(z)$ is a constant on the real line. The constant must zero considering~(\ref{3.18}) at $z=0$. Therefore, the theorem is proven.


\end{proof}

\begin{acknowledgement}
The author ingenuously appreciates the research funding supported by NSTC under the project number 113-2115-M-131 -001. The content of this manuscript does not necessarily reflect the position or the policy of the administration, and no official endorsement should be inferred, neither.
\end{acknowledgement}

\end{document}